\documentclass[conference]{IEEEtran}
\IEEEoverridecommandlockouts

\usepackage{cite}
\usepackage{amsmath,amssymb,amsfonts}
\usepackage{graphicx}
\usepackage{textcomp}
\usepackage{xcolor}
\usepackage{amsthm}
\usepackage{algorithm}
\usepackage{hyperref}
\usepackage{bm}
\usepackage{algorithmicx,algpseudocode}
\usepackage{enumitem} \setlist[itemize]{leftmargin=*}
\usepackage{booktabs}
\usepackage{multirow}
\usepackage{tabularx}
\usepackage{array}

\newtheorem{proposition}{Proposition}

\newtheorem{lemma}{Lemma}

\DeclareMathOperator{\diag}{diag}

\DeclareMathOperator{\trace}{tr}

\def\BibTeX{{\rm B\kern-.05em{\sc i\kern-.025em b}\kern-.08em
    T\kern-.1667em\lower.7ex\hbox{E}\kern-.125emX}}
\begin{document}

\title{
Total Variation Sparse Bayesian Learning for Block Sparsity via Majorization–Minimization
}

\author{Yanbin He and Geethu Joseph\\
Signal Processing Systems Group, Delft University of Technology, The Netherlands\\Emails: \{y.he-1,g.joseph\}@tudelft.nl}

\maketitle

\begin{abstract}
Block sparsity is a widely exploited structure in sparse recovery, offering significant gains when signal blocks are known. Yet, practical signals often exhibit unknown block boundaries and isolated non-zero entries, which challenge traditional approaches. A promising method to handle such complex sparsity patterns is the difference-of-logs total variation (DoL-TV) regularized sparse Bayesian learning (SBL). However, due to the complex form of DoL-TV term, the resulting optimization problem is hard to solve. This paper develops a new optimization framework for the DoL-TV SBL cost function. By introducing an exponential reparameterization of the SBL hyperparameters, we reveal a novel structure that admits a majorization–minimization formulation and naturally extends to unknown noise variance estimation. Sparse recovery results on both synthetic data and extended source direction-of-arrival estimation demonstrate improved accuracy and runtime performance compared to benchmark methods.
\end{abstract}

\begin{IEEEkeywords}
Block sparsity, majorization minimization, alternating direction method of multipliers, extended source direction-of-arrival
\end{IEEEkeywords}

\section{Introduction}\label{sec.intro}
Block sparsity, a signal structure in which most entries are zero and nonzeros occur in blocks, is widely studied in sparse recovery due to its practical relevance. For example, images in wavelets or DCT bases often produce groups of active coefficients \cite{charlie2021generating,zhang2024block}. In wireless communications, due to the clustered multipath components, channels can be block sparse in the angular domain \cite{hur2016proposal,he2023bayesian}. 
Consequently, various sparse recovery methods have been developed to exploit block sparsity, most of which are tailored for scenarios with known block boundaries.
Examples include adaptations of traditional compressed sensing algorithms such as basis pursuit~\cite{yuan2006model,van2009probing}, orthogonal matching pursuit~\cite{eldar2010block}, and sparse Bayesian learning (SBL)~\cite{zhang2011sparse}.
While these approaches can work well under known block boundaries, this assumption can be unrealistic for many practical applications \cite{zhang2024block}. Thus, our paper focuses on block sparse recovery with unknown block boundaries.

In practice, most block-sparse signals pose additional challenges, including unknown block boundaries and isolated nonzeros. For example, in the direction-of-arrival (DOA) estimation problem, the presence of both point and extended sources results in block sparse underlying signals with isolated nonzeros \cite{xenaki2016block}. Similarly, in automotive occupancy grid mapping, occupied regions form clusters whose sizes vary with obstacles and isolated nonzeros due to small obstacles or pedestrians~\cite{onen2024occupancy,zhai2025spatial}. To handle unknown boundaries, one approach is to use total variation (TV) regularization to adaptively promote blocks by discouraging rapid fluctuations \cite{liu2015image, aggarwal2016hyperspectral,nowak2011fused}. However, using TV directly on the signal can be inefficient and complicate the optimization problem~\cite{sant2022block}. 
An alternative is the pattern-coupled SBL (PCSBL) framework that operates on the hyperparameters that control the signal. It couples adjacent hyperparameters so that nearby entries are more likely to be active together, promoting blocks without manually setting block boundaries. However, it can bias the model towards block patterns, requires careful tuning, and degrades performance when the signal has a few isolated entries \cite{sant2022block}. This motivates a more flexible regularization that promotes block structures while tolerating isolated entries.

Recently, the difference of logs (DoL)-TV regularization with SBL has been shown to offer superior robustness and flexibility for block sparsity with few isolated entries, compared to other regularizers and benchmarks \cite{sant2022block}. However, since the convexity of the DoL-TV term relies on the relative magnitudes of adjacent hyperparameters, standard optimization techniques like majorization–minimization (MM) and convex formulations as in \cite{wipf2007new,wang2024iterative} cannot be directly used. 
Thus, \cite{sant2022block} relies on an expectation maximization-based method with alternating optimization between even and odd indices of hyperparameters, under the additional assumption of known noise variance. This assumption limits its practicality in real-world applications, where the noise variance is typically unknown. Thus, in this paper, we develop a novel optimization framework for the DoL-TV SBL that operates under unknown noise variance.

We present an MM framework for DoL-TV SBL that employs exponential reparameterization of its hyperparameters, enabling efficient optimization and noise variance estimation for real-world block-sparse signals with isolated nonzeros.
Our contributions are twofold.

\begin{itemize}
    \item \emph{MM Framework With Noise Estimation}: We design a novel MM framework for DoL-TV SBL cost that is transformed into a sum of convex and nonconvex terms. The DoL-TV term becomes convex, while the remaining nonconvex term can be efficiently majorized. This approach generates a sequence of convex subproblems and naturally supports joint estimation of the noise variance.
    
    \item \emph{Numerical Results}: We conduct numerical experiments on synthetic signals and extended source DOA estimation, validating the performance of our approach.
\end{itemize}

\section{DoL-TV SBL Cost}
In this section, we briefly introduce the block sparse recovery problem and the DoL-TV term following the formulation in \cite{sant2022block}. We consider the general multiple measurement vectors linear inversion problem as
\begin{equation}\label{eq.problem_basic}
     \bm Y = \bm H \bm X + \bm N,
\end{equation}
where $\bm Y \in \mathbb{C}^{M\times L}$ contains the measurements with $L$ snapshots, $\bm H \in \mathbb{C}^{M \times N}$ is the measurement matrix, $\bm X \in \mathbb{C}^{N\times L}$ is the unknown sparse matrix, and $\bm N$ is the noise matrix. We denote the $l$th column of $\bm X$ and $\bm N$ as $\bm x_l$ and $\bm n_l$, respectively. Here, $\bm x_l$ is block sparse, sharing the same support for different snapshots, and $\bm n_l$ is assumed to follow the additive zero-mean complex Gaussian distribution with variance $\lambda$, i.e., $\bm n_l\sim\mathcal{CN}(\bm 0, \lambda \bm I_M)$. Our goal is to recover block sparse matrix $\bm X$ using $\bm Y$ and $\bm H$. With $L=1$, our problem reduces to the standard block sparse recovery problem.

To impose the sparsity, SBL adopts a Gaussian prior $\mathcal{CN}(\bm 0, \bm\Gamma)$ on each $\bm x_l$, with $\bm\Gamma = \text{diag}(\bm\gamma)$ for $\bm \gamma\in\mathbb{R}^N$ for $l=1,\cdots,L$. Further, its hierarchical probabilistic model assumes an Inverse-Gamma distribution over $\bm \gamma$ and $\lambda$, 
\begin{align*}
    p(\bm \gamma) &= \prod_{n=1}^N \frac{b_0^{a_0}}{\Gamma(a_0)}\gamma_n^{-a_0-1}\exp(-\frac{b_0}{\gamma_n})\notag\\
    p(\lambda) &= \frac{b_0^{a_0}}{\Gamma(a_0)}\lambda^{-a_0-1}\exp(-\frac{b_0}{\lambda}),
\end{align*}
 where $\gamma_n$ is the $n$th element of $\bm \gamma$, $\Gamma(\cdot)$ is the gamma function, $a_0$ is the shape parameter, and $b_0$ is the scale parameter. 
To promote the block sparsity of each $\bm x_l$, we adopt the DoL-TV regularizer $\tau\sum_{n=2}^N |\log\gamma_n-\log\gamma_{n-1}|$, where $\tau\geq 0$ is the regularization weight. Estimating $\bm X$ using SBL then reduces to estimating $\bm\gamma$ and $\lambda$ through minimizing the equivalent objective function as
\begin{multline}
    \mathcal{L}(\bm\gamma,\lambda) = L\log |\bm \Sigma_{\bm Y}| + \trace(\bm Y^\mathsf{H} \bm \Sigma_{\bm Y}^{-1} \bm Y) + \tau\|\bm D\log \bm \gamma\|_1 \\+ \sum_{n=1}^N(a_0+1)\log \gamma_n+ \frac{b_0}{\gamma_n}+(a_0+1)\log\lambda+\frac{b_0}{\lambda}, 
    \label{eq:original_cost}
\end{multline}
where $\bm \Sigma_{\bm Y} = \lambda \bm I + \bm H \bm \Gamma \bm H^\mathsf{H}$, 
and $\bm D\in\{-1,0,1\}^{N-1\times N}$ is a differential matrix, $\|\bm D\log \bm \gamma\|_1 = \sum_{n=2}^N\left| \log \gamma_n - \log \gamma_{n-1} \right|$, forming the DoL-TV term as introduced in \cite{sant2022block}. 

DoL-TV SBL estimates the sparse vectors in $\bm X$ through solving~\eqref{eq:original_cost}, which is non-convex in $\bm\gamma$ and $\lambda$. The convexity of the DoL-TV term depends on the relative magnitudes of adjacent hyperparameters, preventing the construction of a universal majorizer~\cite{sant2022block} and making optimization difficult. In the next section, we present our alternative formulation where~\eqref{eq:original_cost} can be readily solved via iterative optimization.

\section{An Iterative Algorithm for DoL-TV SBL}
Inspired by the geodesic convexity in \cite{wiesel2012geodesic}, we reparameterize $\bm \gamma$ and $\lambda$ with variable $\bm z$ and $\beta$ such that $\bm \gamma = e^{\bm z}$ and $\lambda = e^{\beta}$, respectively \cite{boyd2004convex}. Here, operator $\exp(\cdot)$ is element-wise if the argument is a vector. The reparameterized cost is
\begin{multline}
    \mathcal{L}(\bm z,\beta) = L\log |\bm \Sigma_{\bm Y}| + \trace(\bm Y^\mathsf{H} \bm \Sigma_{\bm Y}^{-1} \bm Y) \\+ \mathcal{R}(\bm z,\beta)+ \tau\|\bm D \bm z\|_1,
    \label{eq:transformed_cost}
\end{multline}
with $\mathcal{R}(\bm z,\beta)=\sum_{n=1}^N(a_0+1)z_n+ b_0 e^{-z_n}+(a_0+1)\beta+{b_0}e^{-\beta}$ and $\bm \Sigma_{\bm Y} = e^{\beta} \bm I + \bm H \diag(e^{\bm z}) \bm H^\mathsf{H}$. 

This reparameterization reveals a structure that is not apparent in the original formulation and enables an MM-based optimization strategy. At a high level, all terms in the transformed cost except $\trace(\bm Y^\mathsf{H} \bm \Sigma_{\bm Y}^{-1} \bm Y)$ are convex in $(\bm z,\beta)$, while the nonconvex term can be efficiently majorized. Specifically, 
it is easy to verify that $\mathcal{R}(\bm z,\beta)$ and DoL-TV term $\tau\|\bm D \bm z\|_1$ are convex functions of $(\bm z,\beta)$. We now show that $L\log |\bm \Sigma_{\bm Y}|$ is also convex in $(\bm z,\beta)$ through Proposition \ref{prop.log_cvx}. 
\begin{proposition}\label{prop.log_cvx}
    Function $\log |e^{\beta} \bm I + \bm H \diag(e^{\bm z}) \bm H^\mathsf{H}|$ is convex in $(\bm z,\beta)\in \mathbb{R}^{N+1}$.
\end{proposition}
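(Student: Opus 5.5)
The plan is to write the determinant as a sum over subsets and reduce to the convexity of log-sum-exp. Set $\bm A = [\bm I_M, \bm H] \in \mathbb{C}^{M\times (M+N)}$ and $\bm w = (\beta\bm 1_M, \bm z)\in\mathbb{R}^{M+N}$. Then
\begin{equation*}
\bm \Sigma_{\bm Y} = e^{\beta}\bm I + \bm H\diag(e^{\bm z})\bm H^\mathsf{H} = \bm A \diag(e^{\bm w})\bm A^\mathsf{H}.
\end{equation*}
The map $(\bm z,\beta)\mapsto \bm w$ is linear, and convexity is preserved under affine precomposition. It therefore suffices to show that $\bm w\mapsto \log|\bm A\diag(e^{\bm w})\bm A^\mathsf{H}|$ is convex on $\mathbb{R}^{M+N}$.

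For this, apply the Cauchy--Binet formula to $\bm A\diag(e^{\bm w})\bm A^\mathsf{H} = (\bm A\diag(e^{\bm w/2}))(\bm A\diag(e^{\bm w/2}))^\mathsf{H}$. This gives
\begin{equation*}
|\bm A\diag(e^{\bm w})\bm A^\mathsf{H}| = \sum_{S\subseteq\{1,\dots,M+N\},\,|S|=M} \big|\det \bm A_{:,S}\big|^2 \exp\Big(\sum_{i\in S} w_i\Big),
\end{equation*}
where $\bm A_{:,S}$ is the $M\times M$ submatrix of columns indexed by $S$. Each coefficient $c_S=|\det\bm A_{:,S}|^2$ is nonnegative. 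The subset $S=\{1,\dots,M\}$ has $c_S=1$, so the sum is strictly positive and the logarithm is well-defined everywhere; this also matches $\bm\Sigma_{\bm Y}\succ 0$. Dropping the subsets with $c_S=0$, we obtain
\begin{equation*}
\log|\bm\Sigma_{\bm Y}| = \log\sum_{S: c_S>0}\exp\big(\bm 1_S^\mathsf{T}\bm w + \log c_S\big).
\end{equation*}
This is log-sum-exp composed with an affine map of $\bm w$, and log-sum-exp is convex~\cite{boyd2004convex}. Hence $\log|\bm\Sigma_{\bm Y}|$ is convex in $\bm w$, and so in $(\bm z,\beta)$.

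The main step needing care is the Cauchy--Binet expansion with complex $\bm H$. There one has to check that the minors pair as $\det(\bm B_{:,S})\det(\bm B_{:,S}^\mathsf{H}) = |\det\bm B_{:,S}|^2$ for $\bm B=\bm A\diag(e^{\bm w/2})$, and that each minor factors as $\det\bm A_{:,S}\cdot e^{\sum_{i\in S}w_i/2}$; both are standard. As an alternative route, one could show $\log\det$ is geodesically convex~\cite{wiesel2012geodesic}. A direct Hessian computation, using $\partial_{w_i}\log|\bm\Sigma| = e^{w_i}\bm a_i^\mathsf{H}\bm\Sigma^{-1}\bm a_i$, reduces to showing that $\diag(\bm p) - (\bm P\circ\bar{\bm P})$ is positive semidefinite, where $\bm P=\diag(e^{\bm w/2})\bm A^\mathsf{H}\bm\Sigma^{-1}\bm A\diag(e^{\bm w/2})$ is a projection and $\bm p$ its diagonal. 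That matrix inequality is more delicate, so the combinatorial route is preferred.
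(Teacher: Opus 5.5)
Your proof is correct. Its first step is the same as the paper's: the paper also writes $\bm\Sigma_{\bm Y}=e^{\beta}\sum_{m}\bm e_m\bm e_m^\mathsf{H}+\sum_n e^{z_n}\bm h_n\bm h_n^\mathsf{H}$ and uses that copying $\beta$ into $M$ coordinates is a linear map.

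The two proofs differ in how they establish the key fact, namely that $\bm w\mapsto\log|\sum_i e^{w_i}\bm a_i\bm a_i^\mathsf{H}|$ is convex when the $\bm a_i$ span $\mathbb{C}^M$.

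\begin{itemize}
\item \emph{The paper's route.} It imports Wiesel's real-valued Lemma 4, which rests on geodesic convexity, and ports it to $\mathbb{C}$ by realification. The Hermitian matrix is mapped to a $2M\times 2M$ real matrix built from $[\bm u_i^\mathsf{T},\bm v_i^\mathsf{T}]^\mathsf{T}$ and $[-\bm v_i^\mathsf{T},\bm u_i^\mathsf{T}]^\mathsf{T}$. The paper checks that these vectors span $\mathbb{R}^{2M}$, and then transfers the result back via $\det\mathcal{M}(\bm U)=\det^2(\bm U)$.
\item \emph{Your route.} The Cauchy--Binet expansion works directly over $\mathbb{C}$ and writes the log-determinant explicitly as a log-sum-exp of affine functions. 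The argument is self-contained and elementary. It needs no external lemma and no realification.
\end{itemize}

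Your argument also proves the paper's lemma in full generality, for any $\bm A$ of full row rank. There the spanning hypothesis becomes simply the existence of one $S$ with $c_S>0$. The paper's route ties the result to the geodesic-convexity framework that motivated the exponential reparameterization. It also offers a reusable device for lifting real-valued results to the complex case.

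The Hessian route you set aside is less delicate than you suggest. Since $\bm P$ is Hermitian and idempotent, $p_i=P_{ii}=(\bm P^2)_{ii}=\sum_j|P_{ij}|^2$. Hence $\diag(\bm p)-\bm P\circ\bar{\bm P}$ is a weighted graph Laplacian, with quadratic form $\tfrac12\sum_{i,j}|P_{ij}|^2(x_i-x_j)^2\ge 0$ for real $\bm x$. This gives a third proof that is equally short.
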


\begin{proof}
    See Appendix.
\end{proof}
The remaining term in the cost, i.e., $\trace(\bm Y^\mathsf{H} \bm \Sigma_{\bm Y}^{-1} \bm Y)$, is nonconvex, but it can be majorized via an upper bound through a variable $\bm \Theta\in\mathbb{C}^{N\times L}$ as~\cite{tipping2001sparse}
\begin{equation}\label{eq.upper_bound_quadra}
    \trace(\bm Y^\mathsf{H} \bm \Sigma_{\bm Y}^{-1} \bm Y) \leq e^{-\beta} \|\bm Y - \bm H \bm \Theta\|_\mathrm{F}^2 + \trace(\bm \Theta^\mathsf{H} \diag(e^{-\bm z}) \bm \Theta),
\end{equation}
with equality holding only when
\begin{equation}\label{eq.step_major}
    \bm \Theta = e^{-\beta}\left( e^{-\beta} \bm H^\mathsf{H} \bm H + \diag(e^{-\bm z}) \right)^{-1}\bm H^\mathsf{H}\bm Y.
\end{equation}
Thus, using~\eqref{eq.upper_bound_quadra} and~\eqref{eq.step_major}, we derive an MM procedure to solve the minimization of~\eqref{eq:transformed_cost}, or its equivalent problem~\eqref{eq:original_cost}.

\subsection{An MM Procedure}
We recall that MM solves a nonconvex cost by iteratively optimizing a convex upper-bound or surrogate function at each step. In the $k$th iteration, given the previous iterates $(\bm z^{(k-1)},\beta^{(k-1)})$, we first compute $\bm \Theta^{(k)}$ using~\eqref{eq.step_major}, which is then used to construct a surrogate function $g(\bm z,\beta|\bm \Theta^{(k)})$ as
\begin{multline}\label{eq.cost_mini}
    g(\bm z,\beta|\bm \Theta^{(k)})=L\log |\bm \Sigma_{\bm Y}| + e^{-\beta} \|\bm Y - \bm H \bm \Theta^{(k)}\|_\mathrm{F}^2 \\ + \trace((\bm \Theta^{(k)})^\mathsf{H} \diag(e^{-\bm z}) \bm \Theta^{(k)})+ \mathcal{R}(\bm z,\beta)+ \tau\|\bm D \bm z\|_1.
\end{multline}
From~\eqref{eq.upper_bound_quadra} and~\eqref{eq.step_major}, we see that this function majorizes $\mathcal{L}(\bm z,\beta)$ at $\bm \Theta^{(k)}$~\cite{sun2016majorization}, i.e., $\forall (\bm z,\beta)$,
\begin{align}
    g(\bm z,\beta|\bm \Theta^{(k)})&\geq \mathcal{L}(\bm z,\beta),\notag\\
    g(\bm z^{(k-1)},\beta^{(k-1)}|\bm \Theta^{(k)})&= \mathcal{L}(\bm z^{(k-1)},\beta^{(k-1)}),\notag
\end{align}
where the equality holds because $\bm \Theta^{(k)}$ minimizes~\eqref{eq.upper_bound_quadra} evaluated at $(\bm z^{(k-1)},\beta^{(k-1)})$. After constructing the surrogate function, the $k$th iteration solves the following problem
\begin{equation}\label{eq.step_min}
    (\bm z^{(k)},\beta^{(k)})=\arg\min_{\bm z,\beta} g(\bm z,\beta|\bm \Theta^{(k)}),
\end{equation}
where the cost function~\eqref{eq.cost_mini} is convex but not differentiable. We next present an alternating direction method of multipliers (ADMM)-based solution for \eqref{eq.step_min}. However, any convex solvers compatible with non-differentiable functions can be leveraged.

\subsection{ADMM Procedure}
ADMM is an iterative method that uses auxiliary variables to split the problem and dual variables to enforce constraints with coordinating updating. With an auxiliary variable $\bm u\in\mathbb{R}^{N-1}$ and a dual variable $\bm d\in\mathbb{R}^{N-1}$, the augmented Lagrangian of ADMM is 
\begin{multline}
    \tilde{\mathcal{L}} = L\log |\bm \Sigma_{\bm Y}| + e^{-\beta}\|\bm Y - \bm H \bm \Theta^{(k)}\|_\mathrm{F}^2  \\+ \trace((\bm \Theta^{(k)})^\mathsf{H} \diag(e^{-\bm z}) \bm \Theta^{(k)}) +\mathcal{R}(\bm z,\beta)\\+\tau\|\bm u\|_1+ \frac{\rho}{2}\|\bm D\bm z-\bm u + \bm d\|_2^2-\frac{\rho}{2}\|\bm d\|_2^2.\notag
\end{multline}
Here, $\bm u$ separates the non-differentiable DoL-TV term from $\bm z$ while $\bm d$ enforces the constraint $\bm u = \bm D \bm z$. ADMM iteratively updates $(\bm z, \beta)$, $\bm u$ and $\bm d$. So, the $t$th ADMM iteration has three steps as detailed below.


\subsubsection{Step 1}
We fix the auxiliary variable $\bm{u}=\bm u^{(k,t-1)}$ and the dual variable $\bm d=\bm d^{(k,t-1)}$, and optimize for $(\bm z, \beta)$ as
\begin{align}\label{eq.step1}
(\bm z^{(k,t)},\beta^{(k,t)})=&\arg\min_{\bm z,\beta}
    L\log |\bm \Sigma_{\bm Y}| + e^{-\beta}\|\bm Y - \bm H \bm \Theta^{(k)}\|_\mathrm{F}^2 \notag \\+& \trace((\bm \Theta^{(k)})^\mathsf{H} \diag(e^{-\bm z}) \bm \Theta^{(k)}) +\mathcal{R}(\bm z,\beta)\notag \\+& \frac{\rho}{2}\|\bm D\bm z-\bm u^{(k,t-1)} + \bm d^{(k,t-1)}\|_2^2.
\end{align}
Problem~\eqref{eq.step1} is convex and differentiable in $(\bm z, \beta)$, which can be solved using any off-the-shelf convex solver.

\subsubsection{Step 2}
We fix $(\bm z, \beta)=(\bm z^{(k,t)},\beta^{(k,t)})$ and the dual variable $\bm d=\bm d^{(k,t-1)}$, and update the auxiliary variable $\bm u$ as
\begin{align}\label{eq.step2}
    \bm u^{(k,t)}=&\arg\min_{\bm u}
    \tau\|\bm u\|_1+ \frac{\rho}{2}\|\bm D\bm z^{(k,t)} + \bm d^{(k,t-1)}-\bm u\|_2^2\notag\\
    =&\mathcal{S}_{\tau/\rho}(\bm D\bm z^{(k,t)} + \bm d^{(k,t-1)}).
\end{align}
Here, $\mathcal{S}_{\tau/\rho}(\cdot)$ is an element-wise soft thresholding operator, defined as $\mathcal{S}_{\tau/\rho}(\bm v)=\text{sgn}(\bm v)\odot\text{max}(|\bm v|-\tau/\rho,0)$, where $\text{sgn}(\cdot)$ returns the sign of each element of the argument vector $\bm v$ and $\odot$ is the Hadamard product.

\subsubsection{Step 3}
The last step is the dual update given by
\begin{equation}\label{eq.step3}
    \bm d^{(k,t)} = \bm d^{(k,t-1)} + \bm D\bm z^{(k,t)} - \bm u^{(k,t)},
\end{equation}
which concludes the $t$th iteration of ADMM. After $T$ iterations, the solution to~\eqref{eq.step_min} can be set as $(\bm z^{(k)},\beta^{(k)})=(\bm z^{(k,T)},\beta^{(k,T)})$. The resulting MM algorithm with ADMM is summarized in Algorithm \ref{al.exp_sbl_dol}. After $(\bm z,\beta)$ is obtained, the estimate of $\bm X$ is given by its MAP estimate, which is simply $\hat{\bm X}=\bm \Theta$ using~\eqref{eq.step_major}~\cite{sant2022block}. 

\begin{algorithm}[t]
\caption{Exponential-DoL SBL (Exp-DoL SBL)}
\label{al.exp_sbl_dol}
\begin{algorithmic}[1]
\Statex \textit {\bf Input:} Measurement $\bm Y\in\mathbb{C}^{M\times L}$, matrix $\bm H\in \mathbb{C}^{M\times N}$, maximum ADMM iteration number $T$, threshold $\epsilon$, parameters $\tau$ and $\rho$

\State \textbf{Initialization}: set $\bm z^{(0)}=\bm 0$, $\beta^{(0)}=0$, $\bm z^{(1)}=\bm 1$, $\beta^{(1)}=1$, $\bm u^{(0)}=\bm 1$, $\bm d^{(0)}=\bm 1$, and $k = 1$
\While{$\max\{\|\bm z^{(k)} - \bm z^{(k-1)}\|_\infty, |\beta^{(k)} - \beta^{(k-1)}|\}> \epsilon$}

\State Compute $\bm \Theta^{(k)}$ using~\eqref{eq.step_major} with $(\bm z^{(k-1)},\beta^{(k-1)})$

\State Initialize $\bm u^{(k,0)}=\bm u^{(k-1)}$, and $\bm d^{(k,0)}=\bm d^{(k-1)}$

\For{$t=1,2,\ldots, T$}

\State Compute $(\bm z^{(k,t)},\beta^{(k,t)})$ by solving~\eqref{eq.step1}

\State Compute $\bm u^{(k,t)}$ using~\eqref{eq.step2}

\State Compute $\bm d^{(k,t)}$ using~\eqref{eq.step3}


\EndFor

\State Set $(\bm z^{(k)},\beta^{(k)}) = (\bm z^{(k,T)},\beta^{(k,T)})$, $\bm u^{(k)}=\bm u^{(k,T)}$, and $\bm d^{(k)}=\bm d^{(k,T)}$

\State Set $k=k+1$

\EndWhile

\State Compute $\hat{\bm X}=\bm \Theta$ using~\eqref{eq.step_major} with $(\bm z^{(k-1)},\beta^{(k-1)})$

\Statex \textit {\bf Output:} Estimate $\hat{\bm X}$
\end{algorithmic}
\end{algorithm}




\section{Numerical Evaluation}

We present two sets of numerical results to evaluate our algorithm: synthetic signal recovery and extended sources DOA estimation. Benchmark algorithms are SBL \cite{wipf2004sparse}, PCSBL \cite{fang2014pattern}, EM DoL-TV SBL \cite{sant2022block}, and Adaptive-TV SBL \cite{djelouat2025adaptive}. We note that both EM DoL-TV SBL and Adaptive-TV SBL require noise variance as input, which we input the true noise variance. For a fair comparison, we present two versions of our Exp-DoL SBL, i.e., with and without the knowledge of noise variance by fixing $\beta$ at the ground truth and treating $\beta$ as variable, respectively. In Algorithm \ref{al.exp_sbl_dol}, we solve~\eqref{eq.step1} using $\texttt{fminunc}$ in MATLAB \cite{MATLAB} with $T=1$ ADMM iteration for efficiency.

\subsection{Synthetic Signals}\label{sec.eva_syn}

\begin{figure}[t]
    \centering
    \includegraphics[width=0.7\linewidth]{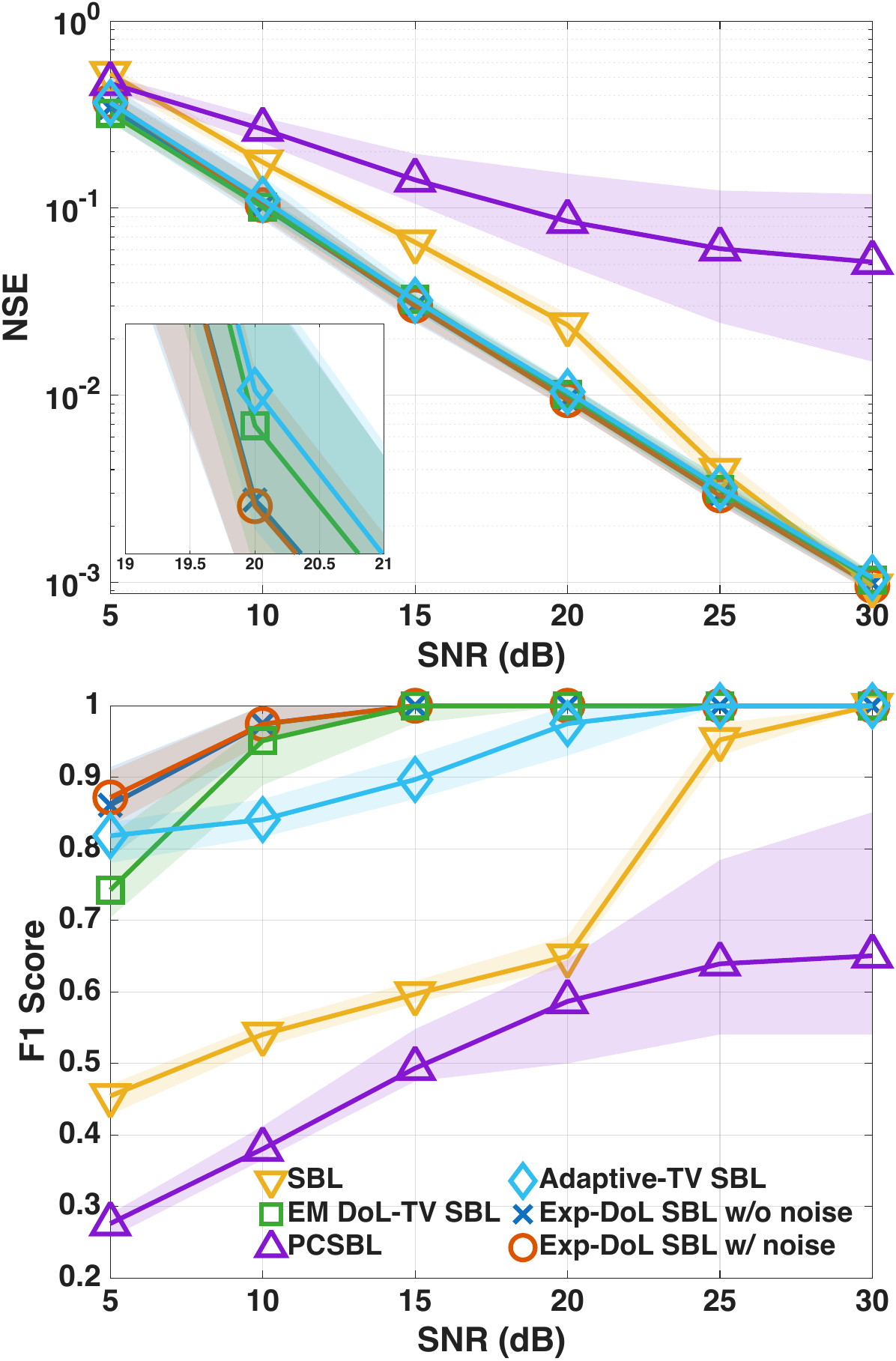}
    \caption{NSE and F1 score as functions of SNR.}
    \label{fig.nse_srr}
\end{figure}

\begin{figure*}
    \centering
    \includegraphics[width=1\linewidth]{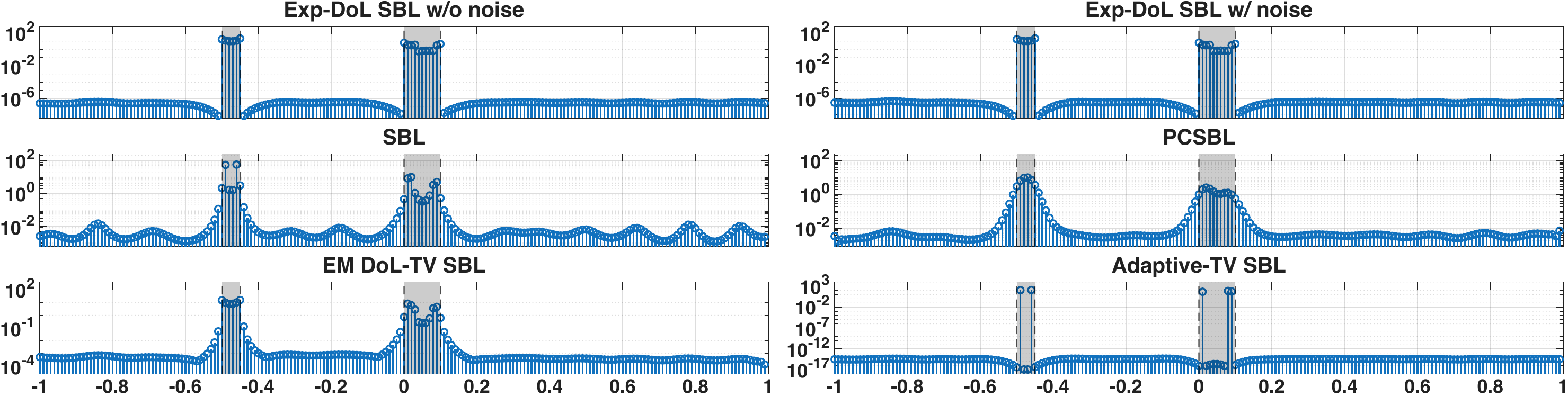}
    \caption{Power of reconstructed sparse signals across different angles. Shaded area: location of extended source.}
    \label{fig.doa}
\end{figure*}

We use~\eqref{eq.problem_basic} to generate synthetic data. We opt for $M=40$, $N=300$, and $L=5$. The entries of measurement matrix $\bm H$ are drawn independently from $\mathcal{CN}(0,1)$, with columns subsequently normalized to unit norm. We choose $\bm X$ \cite{sant2022block} to consist of three contiguous nonzero blocks of length five and five isolated nonzero entries with the nonzero entries drawn independently from $\mathcal{CN}(0,1)$. The noise variance $\lambda$ in~\eqref{eq.problem_basic} is determined by $\text{SNR~(dB)} = 10\log_{10}\mathbb{E}\{\|\bm H\bm X\|_\mathrm{F}^2/\|\bm N\|_\mathrm{F}^2\}$, with SNR values of $\{5,10,15,20,25,30\}$. For Algorithm \ref{al.exp_sbl_dol}, we empirically set $\tau=0.2$ and $\rho=1$.

We consider three metrics for performance evaluation: $25\%/50\%/75\%$ quartiles of normalized squared error (NSE), F1 score, and run time. We define $\text{NSE} = \|\bm X-\hat{\bm X}\|_\mathrm{F}^2/\|\bm X\|_\mathrm{F}^2$, where ${\bm X}$ is the ground truth and $\hat{\bm X}$ is its estimate. We follow the definition of F1 score in
\cite{sant2022block}, where $\text{F1}=1$ means perfect support recovery. We limit the number of EM/MM iterations for all algorithms to two hundred. The results are shown in Fig. \ref{fig.nse_srr} and Tab. \ref{tab.time_syn}.

Fig. \ref{fig.nse_srr} shows that PCSBL has the worst performance. This is due to its incapability to deal with isolated entries. Our Exp-DoL SBL has better NSE and F1 score performance than Adaptive-TV SBL and EM DoL-TV SBL, even when the noise variance is unknown. This is achieved with lower runtime except for $\text{SNR} = 5$~dB. Its improved accuracy and runtime demonstrate the effectiveness of our MM optimization procedure and its better adaptability to block sparse signal with isolated entries, especially in support recovery.

\begin{table}
    \centering
    \scriptsize
    \newcolumntype{Y}{>{\centering\arraybackslash}X}
    \setlength{\tabcolsep}{1.5pt} 
    \renewcommand{\arraystretch}{1.2}
    \caption{Average runtime in seconds. \textbf{Bold}: the best result.}
    \begin{tabularx}{0.48\textwidth}{c|YYYYYY}
        \hline
        SNR (dB) 
        & Exp-DoL SBL (w/o noise) 
        & Exp-DoL SBL (w/ noise) 
        & SBL 
        & PCSBL 
        & EM DoL-TV SBL 
        & Adaptive-TV SBL \\
        \hline
        5  & 1.0154 & 0.7792 & 2.0587 & 2.1834 & \textbf{0.6087} & 4.4844 \\
        10 & \textbf{0.5570} & 0.5635 & 1.5928 & 2.2012 & 0.5862 & 4.5207 \\
        15 & 0.4364 & \textbf{0.3924} & 1.1341 & 2.1424 & 0.6364 & 4.5988 \\
        20 & \textbf{0.3649} & 0.3864 & 0.9576 & 2.1704 & 0.6038 & 4.4447 \\
        25 & 0.3150 & \textbf{0.3008} & 0.8755 & 2.1909 & 0.5971 & 4.1597 \\
        30 & 0.3931 & \textbf{0.2950} & 0.9106 & 2.2129 & 0.5907 & 3.7283 \\
        \hline
    \end{tabularx}
    \label{tab.time_syn}
\end{table}

\subsection{Extended Sources DOA Estimation}

Following \cite{xenaki2016block}, we consider DOA estimation using half-wavelength spacing uniform linear array consisting of $M = 20$ elements. There are two spatially extended sources, covering the DOA intervals $[-0.5,-0.45]$ and $[0,0.1]$, respectively. Each continuous source is modeled with deterministic amplitude but uniformly distributed phase. The amplitudes of two sources are $1$ and $0.5$, respectively. The dictionary matrix is constructed based on steering vectors evaluating over a uniform grid in the spatial domain $[-1, 1)$ with a grid interval of $0.01$, resulting in $N = 200$ grid points. The number of snapshots is $L = 40$, and the received signals are corrupted by additive Gaussian noise with SNR fixed at $15$~dB.

Fig. \ref{fig.doa} illustrates the energy of the recovered sparse signals across the angular domain. Shaded areas denote true spatial spread of the extended sources. Methods such as SBL, PCSBL, and EM DoL-TV SBL suffer from high energy leakage, failing to clearly identify the source boundaries. Adaptive-TV SBL sufficiently suppresses energy leakage but fails to recover the continuous power distribution of the extended source, reducing ranges to a few isolated peaks. Our Exp-DoL SBL accurately reconstructs all the angles of the extended sources while maintaining low leakage, demonstrating its superiority.

\section{Conclusion}

We presented a novel optimization framework for the DoL-TV SBL cost to recover block sparse vectors with unknown block boundaries and isolated nonzero entries. By exponential reparameterization, we demonstrated that the DoL-TV SBL cost function admits an MM formulation, leading to a sequence of convex subproblems. Our method naturally enables noise variance estimation and improves practical applicability. Numerical results on recovery of synthetic data and extended source DOA estimation demonstrate consistent performance gains over benchmark methods. Future work includes reducing computational complexity, theoretical analysis of our framework, and the study of other regularizers.

\appendix

\section{Proof of Proposition \ref{prop.log_cvx}}

We need the following lemma to prove the result.
\begin{lemma}[Complex extension of Lemma 4 in \cite{wiesel2012geodesic}]\label{lm.complex_cvx}
    Let $\bm h_i \in \mathbb{C}^M$ for $i=1,\cdots,I$ be a set of vectors which span $\mathbb{C}^M$. The function $\log | \sum_{i=1}^I \exp(z_i)\bm h_i \bm h_i^\mathsf{H} |$ is convex in $\bm z\in \mathbb{R}^I$.
\end{lemma}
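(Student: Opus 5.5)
We need to prove Lemma 1, the complex extension of Wiesel's Lemma 4: for vectors $\bm h_1,\dots,\bm h_I$ spanning $\mathbb{C}^M$, the function $f(\bm z)=\log|\sum_i e^{z_i}\bm h_i\bm h_i^\mathsf{H}|$ is convex in $\bm z\in\mathbb{R}^I$.

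Convexity is equivalent to convexity along every line. So we fix $\bm z\in\mathbb{R}^I$ and a direction $\bm v\in\mathbb{R}^I$, set $\phi(t)=f(\bm z+t\bm v)$, and show $\phi''(0)\ge 0$.

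\textbf{Step 1: well-definedness.} Write $\bm A(t)=\sum_i e^{z_i+tv_i}\bm h_i\bm h_i^\mathsf{H}=\bm H\bm W(t)\bm H^\mathsf{H}$, where $\bm H=[\bm h_1,\dots,\bm h_I]$ and $\bm W(t)=\diag(e^{\bm z+t\bm v})\succ 0$. The $\bm h_i$ span $\mathbb{C}^M$, so $\bm H$ has full row rank. Hence $\bm A(t)\succ 0$, and $\phi$ is smooth.

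\textbf{Step 2: derivatives.} Using $\frac{d}{dt}\log|\bm A|=\trace(\bm A^{-1}\dot{\bm A})$ and $\frac{d}{dt}\bm A^{-1}=-\bm A^{-1}\dot{\bm A}\bm A^{-1}$, we get
\[
\phi''(0)=\trace(\bm A^{-1}\ddot{\bm A})-\trace(\bm A^{-1}\dot{\bm A}\bm A^{-1}\dot{\bm A}),
\]
with $\dot{\bm A}=\bm H\bm W\bm V\bm H^\mathsf{H}$, $\ddot{\bm A}=\bm H\bm W\bm V^2\bm H^\mathsf{H}$ and $\bm V=\diag(\bm v)$.

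\textbf{Step 3: reduce to a projection.} Define $\bm B=\bm W^{1/2}\bm H^\mathsf{H}$, which is $I\times M$, and the Hermitian orthogonal projection $\bm P=\bm B(\bm B^\mathsf{H}\bm B)^{-1}\bm B^\mathsf{H}$. Note $\bm A=\bm B^\mathsf{H}\bm B$, and $\bm V$ commutes with $\bm W^{1/2}$. By cyclicity of the trace,
\[
\phi''(0)=\trace(\bm P\bm V^2)-\trace(\bm P\bm V\bm P\bm V).
\]
Substitute $\bm V^2=\bm V(\bm P+\bm P^\perp)\bm V$, where $\bm P^\perp=\bm I-\bm P$. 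Then
\[
\phi''(0)=\trace(\bm P\bm V\bm P^\perp\bm V\bm P)=\|\bm P^\perp\bm V\bm P\|_\mathrm{F}^2\ge 0,
\]
since $\bm P$, $\bm P^\perp$ and $\bm V$ are all Hermitian. This nonnegativity is the only nontrivial point. The complex setting changes nothing, because the real diagonal $\bm V$ is Hermitian and $\bm P$ is a Hermitian idempotent. Since $\bm z$ and $\bm v$ were arbitrary, this proves Lemma 1.

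\textbf{Step 4: deduce Proposition 1.} Take $I=N+M$. Let $\bm h_n$ be the $n$th column of $\bm H$ for $n=1,\dots,N$, and let $\bm h_{N+m}=\bm e_m$ for $m=1,\dots,M$. Then
\[
e^{\beta}\bm I+\bm H\diag(e^{\bm z})\bm H^\mathsf{H}=\sum_n e^{z_n}\bm h_n\bm h_n^\mathsf{H}+\sum_m e^{\beta}\bm e_m\bm e_m^\mathsf{H}.
\]
This family spans $\mathbb{C}^M$ because it contains the standard basis, regardless of $\bm H$. Lemma 1 gives convexity in the extended variable $\tilde{\bm z}=(\bm z,\beta\bm 1_M)\in\mathbb{R}^{N+M}$. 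The map $(\bm z,\beta)\mapsto\tilde{\bm z}$ is linear, and a convex function composed with a linear map is convex. Hence the function is convex in $(\bm z,\beta)\in\mathbb{R}^{N+1}$, which is Proposition 1.
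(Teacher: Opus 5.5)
Your proof is correct, but it takes a different route from the paper. The paper computes no derivatives. It maps the complex matrix to its real $2M\times 2M$ representation $\mathcal{M}(\cdot)$ and writes each $\mathcal{M}(\bm h_i\bm h_i^{\mathsf H})$ as a sum of two real rank-one terms built from $[\bm u_i^{\mathsf T},\bm v_i^{\mathsf T}]^{\mathsf T}$ and $[-\bm v_i^{\mathsf T},\bm u_i^{\mathsf T}]^{\mathsf T}$, where $\bm h_i=\bm u_i+j\bm v_i$. It then verifies that these $2I$ real vectors span $\mathbb{R}^{2M}$, so the real Lemma 4 of Wiesel applies, and transfers convexity back through $\det\mathcal{M}(\bm U)=\det^2(\bm U)$. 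You instead work natively over $\mathbb{C}$ and show that the second directional derivative equals $\|\bm P^\perp\bm V\bm P\|_{\mathrm F}^2$, using the spanning hypothesis only to get $\bm A\succ 0$.

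The paper's reduction is short once you accept the cited real result and Telatar's determinant identity. It does leave implicit that the real lemma is applied with $2I$ free exponents and then restricted to paired exponents being equal, which is an affine composition. Your argument is self-contained and needs neither the real embedding nor the spanning check for the real vectors. It also produces an explicit Hessian quadratic form that exposes the flat directions, e.g.\ $\bm v\propto\bm 1$, reflecting $\log|e^{c}\bm A|=Mc+\log|\bm A|$. That could be useful for the theoretical analysis the paper defers to future work. Your Step 4 matches the paper's proof of Proposition 1.
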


\begin{proof}
The convexity result for real-valued vectors is established in \cite[Lemma 4]{wiesel2012geodesic}, which we extend to the complex case. Accordingly, we define a mapping for a complex matrix $\bm h_i \bm h_i^\mathsf{H}\in\mathbb{C}^{M\times M}$ with $\bm h_i=\bm u_i+j\bm v_i$ as
\begin{align}
\mathcal{M}(\bm h_i \bm h_i^\mathsf{H})&=
\begin{bmatrix}
\Re(\bm h_i \bm h_i^\mathsf{H}) & -\Im(\bm h_i \bm h_i^\mathsf{H})\\
\Im(\bm h_i \bm h_i^\mathsf{H}) & \Re(\bm h_i \bm h_i^\mathsf{H})
\end{bmatrix}\notag
\\&=\begin{bmatrix}
\bm u_i\\
\bm v_i
\end{bmatrix}
\begin{bmatrix}
\bm u_i\\
\bm v_i
\end{bmatrix}^\mathsf{T} +
\begin{bmatrix}
-\bm v_i\\
\bm u_i
\end{bmatrix}
\begin{bmatrix}
-\bm v_i\\
\bm u_i
\end{bmatrix}^\mathsf{T}
\in\mathbb{R}^{2M\times 2M},\notag
\end{align}
where $\Re$ and $\Im$ denote the real and imaginary parts, respectively. Therefore, we derive
\begin{multline*}
    \mathcal{M}\left(\sum_{i=1}^I \exp(z_i)\bm h_i \bm h_i^\mathsf{H}\right) = \sum_{i=1}^I \exp(z_i)\mathcal{M}(\bm h_i \bm h_i^\mathsf{H})\notag\\=\sum_{i=1}^I \exp(z_i)\left(\begin{bmatrix}
\bm u_i\\
\bm v_i
\end{bmatrix}
\begin{bmatrix}
\bm u_i\\
\bm v_i
\end{bmatrix}^\mathsf{T} +
\begin{bmatrix}
-\bm v_i\\
\bm u_i
\end{bmatrix}
\begin{bmatrix}
-\bm v_i\\
\bm u_i
\end{bmatrix}^\mathsf{T}\right).\notag
\end{multline*}
Since $\bm u_i$ and $\bm v_i$ are real vectors, by Lemma 4 in \cite{wiesel2012geodesic}, to prove the convexity of $\log |\mathcal{M}(\sum_{i=1}^I \exp(z_i)\bm h_i \bm h_i^\mathsf{H}) |$, it is enough to show that $\{[\bm u_i^\mathsf{T}, \bm v_i^\mathsf{T}]^\mathsf{T},[-\bm v_i^\mathsf{T}, \bm u_i^\mathsf{T}]^\mathsf{T}\}_{i=1}^I$ spans $\mathbb{R}^{2M}$. 

To this end, we consider any nonzero vector $\bm w = [\bm a^\mathsf{T}, \bm b^\mathsf{T}]^\mathsf{T}\in\mathbb{R}^{2M}$. We recall that $\{\bm h_i\}_{i=1}^I$ spans $\mathbb{C}^M$, and therefore, there exists $i$ such that $\bm h_i^\mathsf{H}(\bm a + j\bm b)\neq 0$, i.e., 
\begin{align*}
    0&\neq \bm h_i^\mathsf{H}(\bm a + j\bm b)=(\bm u_i-j\bm v_i)^\mathsf{T}(\bm a+j\bm b)\\&=\bm u_i^\mathsf{T}\bm a + \bm v_i^\mathsf{T}\bm b + j\left(-\bm v_i^\mathsf{T}\bm a + \bm u_i^\mathsf{T}\bm b \right)\\&= \bm w^\mathsf{T}[\bm u_i^\mathsf{T}, \bm v_i^\mathsf{T}]^\mathsf{T}+j\bm w^\mathsf{T}[-\bm v_i^\mathsf{T}, \bm u_i^\mathsf{T}]^\mathsf{T}.\notag
\end{align*}
Therefore, we conclude that 
$$ \bm w^\mathsf{T}[\bm u_i^\mathsf{T}, \bm v_i^\mathsf{T}]^\mathsf{T} \neq 0 \quad \text{or} \quad \bm w^\mathsf{T}[-\bm v_i^\mathsf{T}, \bm u_i^\mathsf{T}]^\mathsf{T}\neq 0.$$
Thus, $\{[\bm u_i^\mathsf{T}, \bm v_i^\mathsf{T}]^\mathsf{T},[-\bm v_i^\mathsf{T}, \bm u_i^\mathsf{T}]^\mathsf{T}\}_{i=1}^I$ span $\mathbb{R}^{2M}$, and $\log |\mathcal{M}(\sum_{i=1}^I \exp(z_i)\bm h_i \bm h_i^\mathsf{H}) |$ is convex in $\bm z$ \cite[Lemma 4]{wiesel2012geodesic}. 

The final step is to connect $\log|\mathcal{M}(\sum_{i=1}^I \exp(z_i)\bm h_i \bm h_i^\mathsf{H})|$ and $\log | \sum_{i=1}^I \exp(z_i)\bm h_i \bm h_i^\mathsf{H} |$. By \cite[Lemma 1]{telatar1999capacity}, for any Hermitian positive definite matrix $\bm U$, $\mathcal{M}(\bm U)$ is real positive definite with $\det(\mathcal{M}(\bm U)) = \det^2(\bm U)$. Thus, we have
\begin{equation}
     \log \left| \sum_{i=1}^I \exp(z_i)\bm h_i \bm h_i^\mathsf{H} \right|=\frac{1}{2}\log\left|\mathcal{M}\left(\sum_{i=1}^I \exp(z_i)\bm h_i \bm h_i^\mathsf{H}\right)\right|.\notag
\end{equation}
Hence, $\log | \sum_{i=1}^I \exp(z_i)\bm h_i \bm h_i^\mathsf{H} |$ is also convex in $\bm z$ since positive scaling preserves convexity.
\end{proof}

\subsection*{Proof of Proposition \ref{prop.log_cvx}}

We rewrite $e^{\beta} \bm I + \bm H \diag(e^{\bm z}) \bm H^\mathsf{H}$ with $\bm e_m$ as the $m$th column of the identity matrix as
\begin{equation*}
    e^{\beta} \bm I + \bm H \diag(e^{\bm z}) \bm H^\mathsf{H}= e^\beta\sum_{m=1}^M \bm e_m \bm e_m^\mathsf{H} + \sum_{n=1}^Ne^{z_n}\bm h_n \bm h_n^\mathsf{H}.
\end{equation*}
Since $\{\bm e_m\}_{m=1}^M\cup\{ \bm h_n\}_{n=1}^N$ spans $\mathbb{C}^M$, according to Lemma \ref{lm.complex_cvx} and the fact that convexity is preserved under affine composition, we establish the convexity.

\bibliographystyle{ieeetr}
\bibliography{refs}

\end{document}